\documentclass[reqno, 11pt]{amsart}
\usepackage{amssymb, natbib}
%%%
%%%%% This is only used to create the fast running format file:
%\input{umsa.fd}\input{umsb.fd}\input{omscmr.fd}
%%%
%\dump
%%%% For printing use
\usepackage[nesting]{hyperref}
%%
%%%% For screen viewing use
%\usepackage[colorlinks,anchorcolor=cyan,pagebackref,nesting]{hyperref}
%\usepackage[colorlinks,anchorcolor=cyan]{hyperref}

%\usepackage[T1]{fontenc}

%\def\backref#1{\emph{[#1]}}
%\def\href#1#2{#2}

\usepackage[pdftex]{graphicx}
\usepackage{listings}
\usepackage{multirow}
\usepackage{placeins}
\usepackage{color}
\usepackage{subfigure}
\usepackage{lscape}
\usepackage{dsfont}

% Use these offsets when printing with 120 percent
%\hoffset=-2.05cm \voffset=-1.5cm

\textheight=24cm \textwidth = 16cm \topmargin= -1cm \oddsidemargin
0mm \evensidemargin 0mm

\newcommand{\BlackBoxes}{\global\overfullrule5pt}

\BlackBoxes

\newcommand{\R}{\mathbb{R}}

\newcommand{\Eop}{\mathbb{E}}
\newcommand{\Pop}{\mathbb{P}}
\newcommand{\F}{\mathbb{F}}
\newcommand{\FC}{\mathcal{F}}

\newtheorem{theorem}{Theorem}

\newtheorem{lemma}[theorem]{Lemma}

\theoremstyle{definition}
\newtheorem{example}[theorem]{Example}

\numberwithin{equation}{section} \numberwithin{theorem}{section}

%Good point to hyphenate
\def\0{\kern0pt\-\nobreak\hskip0pt\relax}

\makeatletter
% For a convenient note at the top of the first page
\AtBeginDocument{%
 \def\@serieslogo{%
 \vbox to\headheight{%
 \parindent\z@ \fontsize{6}{7\p@}\selectfont
%\today $\;$ NB\endgraf Draft, don't distribute
 \vss}}}

\def\makeoverbar#1#2#3#4#5#6#7{%
 \setbox0=\hbox{$\m@th#2\mkern#5mu{{}#3{}}\mkern#6mu$}%
 \setbox1=\null \dimen@=#4\fontdimen8#13 \dimen@=3.5\dimen@
 \advance\dimen@ by \ht0 \dimen@=-#7\dimen@ \advance\dimen@ by \wd0
 \ht1=\ht0 \dp1=\dp0 \wd1=\dimen@
 \dimen@=\fontdimen8#13 \fontdimen8#13=#4\fontdimen8#13
 \rlap{\hbox to \wd0{$\m@th\hss#2{\overline{\box1}}\mkern#5mu$}}
 \fontdimen8#13=\dimen@}

\def\mylabel#1#2{{\def\@currentlabel{#2}\label{#1}}}

\makeatother

\begin{document}

%\listoftodos

\makeatletter \providecommand\@dotsep{5} \makeatother
%\listoftodos[Changes in Orange/Red To Do List in Green / Blue]\relax

\title[Extremal Behavior of Long-Term Investors with Power Utility]{Extremal Behavior of Long-Term Investors with Power Utility}

\author[N. \smash{B\"auerle}]{Nicole B\"auerle${}^*$}
\address[N. B\"auerle]{Department of Mathematics,
Karlsruhe Institute of Technology, D-76128 Karlsruhe, Germany}

\email{\href{mailto:nicole.baeuerle@kit.edu}
{nicole.baeuerle@kit.edu}}

\author[S. \smash{Grether}]{Stefanie Grether${}^*$}
\address[S. Grether]{Department of Mathematics,
Karlsruhe Institute of Technology, D-76128 Karlsruhe, Germany}

\email{\href{mailto:stefanie.grether@kit.edu} {stefanie.grether@kit.edu}}

\thanks{${}^*$ Department of Mathematics,
Karlsruhe Institute of Technology, D-76128 Karlsruhe, Germany }

\begin{abstract}
We consider a Bayesian financial market with one bond and one stock where the aim is to maximize the expected power utility from terminal wealth. The solution of this problem is known, however there are some conjectures in the literature about the long-term behavior of the optimal strategy. In this paper we prove now that for positive coefficient in the power utility the long-term investor is very optimistic and behaves as if the best drift has been realized. In case the coefficient in the power utility is negative the long-term investor is very pessimistic and behaves as if the worst drift has been realized.

\end{abstract}
\maketitle

\vspace{0.5cm}
\begin{minipage}{14cm}
{\small
\begin{description}
\item[\rm \textsc{ Key words}]
{\small Bayes Approach, Investment Problem, Stochastic Ordering}
%\item[\rm \textsc{AMS subject classifications}]
%{\small  ..... }
\end{description}
}
\end{minipage}

\section{Introduction}\label{sec:intro}\noindent
This paper investigates structural properties of the optimal portfolio choice in a financial market with one bond and one stock. The drift rate of
the stock price process is modeled as a random variable whose outcome is unknown to the investor. However, the investor is able to observe the stock
price process. Thus we face a Bayesian model. The aim is to maximize the expected power utility from terminal wealth.

Portfolio optimization problems with partial observation, in particular with unknown drift process have been studied extensively over the last decade.
The solution for the Bayesian case for power utility can be found among others in \cite{KZ01}, \cite{br05}, \cite{lm16}. This is a special case of a Hidden Markov model which has been treated in \cite{h03}, \cite{sh04}, \cite{br05} and with full information in \cite{br04}. More general and recent approaches towards these kind of models can be found in \cite{bjork2010}, \cite{shen2012}, \cite{fouque2015}, \cite{,gp16} where processes and utility functions differ. In this paper we simply take the solution for the optimal portfolio from the literature and discuss its behavior when the time horizon tends to infinity.  There have been some conjectures about this behavior in the literature. In \cite{cvitanic2006} in the case of a negative coefficient $\alpha$ for the power function it is reported that the long-term investor behaves more conservative. In \cite{lm16}, \cite{br05} it has been deduced from numerical data that in case $\alpha\in(0,1)$, the long-term investor behaves like an investor with known drift and the drift is the largest possible one. Meaning that the investor is quite optimistic. This observation stresses the fact that the behavior of an investor with power utility is rather different to one with logarithmic utility (see also recent findings in \cite{gp16}, Section 2.3.4), though it is known that the logarithmic utility can be obtained as limiting case from the power utility. In our paper we show that for $\alpha\in(0,1)$, indeed the optimal fraction invested in the stock converges for time horizon to infinity to the largest possible Merton ratio and thus investors are very optimistic. For $\alpha<0$, the point of view changes dramatically.  The optimal fraction invested in the stock converges for time horizon to infinity to the smallest possible Merton ratio. In this case investors are very pessimistic.  In particular the effect of learning does not play a role for a long-term investor in contrast to a short-term investor. For the influence of learning, see e.g. \cite{xia01}.

Our paper is organized as follows: In Section 2 we describe our model and recall the optimal portfolio strategy. In Section 3 we state and prove the behavior of a long-term investor which is quite surprising.

\section{The Investment Problem and the Optimal Solution}\label{sec:mod}\noindent
Suppose that $(\Omega, \FC, \F=\{\FC_t, 0\le t\le T\}, \Pop)$ is a filtered probability space and $T>0$ is a fixed time horizon. We
consider a financial market with one bond and one risky asset. The bond evolves according to
\begin{equation} d B_t = r B_t dt\end{equation}  with $r>0$ being the
interest rate. The stock price process $(S_t)$ is given by
\begin{equation}\label{stockprice} d S_t = S_t \left(\theta dt + \sigma d W_t\right) \end{equation}
where $(W_t)$ is an $\F$-Brownian motion, $\sigma>0$ and $\theta$ is a random variable with known initial distribution $\Pop(\theta=\mu_k) =: p_k>0,\;
k=1,\ldots ,d$ where $\mu_1,\ldots ,\mu_d$ are the possible values of $\theta$. We assume that $\theta$ and $(W_t)$
are independent. The outcome of $\theta$ is not known to the investor, but the investor is able to observe the stock price. Thus we face a Bayesian model. Note that this is a special case of a Hidden Markov Model where the value of the hidden process does not change.

The optimization problem is to find self-financing investment strategies in this market that maximize the expected utility from terminal wealth.
As utility function we choose the power utility $U(x)= \frac1\alpha x^\alpha$ with $\alpha<1,\alpha \neq 0$. The parameter $1-\alpha$ represents the risk aversion of the investor. Smaller $\alpha$ correspond to higher risk aversion.

Let $\F^S=\{\FC^S_t, 0\le t\le T\}$ be the filtration generated by the stock price process $(S_t)$.  In what follows we denote by $\pi_t\in\R$ the fraction of wealth invested in the stock at time $t$. $1-\pi_t$ is then the fraction of wealth invested in the bond at time $t$. If $\pi_t<0$, then this means that the
stock is sold short and $\pi_t>1$ corresponds to a credit. The process $\pi=(\pi_t)$ is called {\em portfolio strategy}. An admissible portfolio strategy has to be an $\F^S$-adapted process. The {\em wealth process} under an admissible portfolio strategy $\pi$ is given by the solution of the stochastic differential
equation
\begin{eqnarray}
d X_t^\pi &=& X_{t}^\pi \left[ (r+ (\theta-r)\pi_t)dt + \sigma \pi_t d W_t \right],
\end{eqnarray} where we assume that ${X}_0^\pi=x_0>0$ is the given initial
wealth. The {\em optimization problem } is  defined by
\begin{eqnarray}\label{eq:optprob}
&& \sup_{\pi} \Eop\left[\frac1\alpha\big({X}_T^\pi\big)^\alpha\right].
\end{eqnarray}
A portfolio strategy $\pi^\ast$ is {\em optimal} if it attains the supremum. The solution of this problem can e.g.\ be found in \cite{KZ01}, \cite{sh04}, \cite{br05}, \cite{lm16}.  We will only briefly sketch its derivation. Since $\theta$ is not known we have to estimate it by $\hat{\mu}_t := \Eop[\theta|\FC^S_t]$ which is the conditional expectation of $\theta$, given our observation up to time $t$. Define now $\widehat{W}_t := W_t +\frac1\sigma \int_0^t (\theta- \hat{\mu}_s) ds$. It can be shown that $(\widehat{W}_t)$ is an $\F^S$-Brownian motion (see Lemma 1 in Rieder \& B\"auerle 2005). Thus, the process
\begin{equation}Y_t := W_t + \frac{\theta-r}{\sigma}t= \widehat{W}_t +\frac1\sigma \int_0^t (\hat{\mu}_s-r)ds\end{equation}
is $\F^S$-adapted and hence observable for the investor. Finally we can represent our wealth process by
\begin{eqnarray}
d X_t^\pi &=& X_{t}^\pi \left[ (r+ (\hat{\mu}_t-r)\pi_t)dt + \sigma \pi_t d \widehat{W}_t \right].
\end{eqnarray} Note that $\hat{\mu}_t = \sum_{k=1}^d \mu_k \Pop(\theta=\mu_k | \FC^S_t)$ and the conditional probabilities $p_k(t):= \Pop(\theta=\mu_k | \FC^S_t)$ satisfy $p_k(0)=p_k$ and
\begin{eqnarray}
d p_k(t) &=& \frac1\sigma (\mu_k-\hat{\mu}_t) p_k(t) d \widehat{W}_t
\end{eqnarray}
which is a special case of the Wonham filter equation, see e.g. \cite{eam94}. The optimization problem is thus reduced to a situation with complete observation and can be solved with standard techniques like e.g.\ the Hamilton-Jacobi-Bellman equation.

In order to present the optimal portfolio strategy, let us introduce the following abbreviations:
Define $\gamma_k := \sigma^{-1}(\mu_k-r)$ and
\begin{equation}L_t(\mu_k,y) := \left\{ \begin{array}{cl} \exp(\gamma_k y- \frac{1}{2} \gamma_k^2t)& \mbox{ for } t\in (0,T],\\ 1 & \mbox{ for } t=0\end{array}\right.
\end{equation}
for $y\in\R,$ and $k=1,\ldots ,d.$ Further set
\begin{equation} F(t,y) := \sum_{k=1}^d L_t(\mu_k,y) p_k.\end{equation}
It is well-known that $(L_t^{-1}(\theta,Y_t))$ is a martingale density process with respect to the filtration $\F^{\theta,W}$ which is the filtration generated by $\theta$ and $(W_t)$. Then we can define a new probability measure $\mathbb{Q}$ by $d\mathbb{Q}/d\Pop= L_T^{-1}(\theta,Y_T)$. Under $\mathbb{Q}$ the process $(Y_t)$ is an $\F^{\theta,W}$-Brownian motion.  The process $(L_t(\theta,Y_t))$ is a $\mathbb{Q}$-martingale with respect to $\F^{\theta,W}$. Note that it can be shown that $\theta$ and $(Y_t)$ are independent under $\mathbb{Q}$. Then the Bayes formula implies that
\begin{equation}
    \Eop[1_{[\theta=\mu_k]}|\FC_t^S] = \frac{\Eop_\mathbb{Q}\Big[1_{[\theta=\mu_k]} L_T(\theta,Y_T)|\FC_t^S \Big]}{\Eop_\mathbb{Q}\Big[L_T(\theta,Y_T)|\FC_t^S \Big]}
\end{equation}
which yields that
\begin{eqnarray}
p_k(t) &=& \Pop(\theta=\mu_k | \FC^S_t) = \Pop(\theta=\mu_k | Y_t) = \frac{L_t(\mu_k,Y_t)p_k}{F(t,Y_t)}.
\end{eqnarray}

The proof of the following theorem can be found in \cite{KZ01} (Theorem 3.2 and Example 3.5) and \cite{br05} (Theorem 8).

\begin{theorem}\label{theo:sol}
The optimal portfolio strategy $(\pi_t^\ast)$ for problem \eqref{eq:optprob} is given in feedback form $\pi_t^\ast = u^\ast(t,T,Y_t)$, with
\begin{equation} u^\ast(t,T,y) = \frac{1}{\sigma(1-\alpha)} \frac{\Eop\left[ F(T,y+W_{T-t})^{\frac{\alpha}{1-\alpha}}\Big(\sum_{k=1}^d p_k \gamma_k L_T(\mu_k,y+W_{T-t}) \Big)\right]}{\Eop\left[F(T,y+W_{T-t})^{\frac{1}{1-\alpha}}\right]}\end{equation}
for $t\in[0,T], y\in\R$.
\end{theorem}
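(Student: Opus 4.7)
The plan is to prove the formula via the martingale/duality approach. After the filter reduction already carried out in the excerpt, the wealth dynamics rewrite as $dX_t^\pi = rX_t^\pi dt + \sigma X_t^\pi \pi_t dY_t$, and under the measure $\mathbb{Q}$ with $d\mathbb{Q}/d\Pop = L_T^{-1}(\theta,Y_T)$ the process $(Y_t)$ is a Brownian motion while $\theta$ and $(Y_t)$ are independent. In particular $(e^{-rt}X_t^\pi)$ is a local $\mathbb{Q}$-martingale, so the static budget constraint reads $\Eop_\mathbb{Q}[e^{-rT}X_T^\pi] = x_0$. Since the filtration generated by $(S_t)$ coincides with the filtration generated by $(Y_t)$, the objective can be conditioned on $\FC_T^S$ and, using independence of $\theta$ and $Y_T$ under $\mathbb{Q}$, rewrites as
\begin{equation*}
\Eop[U(X_T)] = \Eop_\mathbb{Q}[L_T(\theta,Y_T)U(X_T)] = \Eop_\mathbb{Q}[F(T,Y_T)U(X_T)].
\end{equation*}

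The resulting static problem of maximizing $\Eop_\mathbb{Q}[F(T,Y_T)U(X_T)]$ over $\FC_T^S$-measurable $X_T$ subject to $\Eop_\mathbb{Q}[e^{-rT}X_T] = x_0$ is handled by a Lagrange multiplier. For $U(x)=x^\alpha/\alpha$ the first-order condition is $F(T,Y_T) X_T^{\alpha-1} = \lambda e^{-rT}$, yielding
\begin{equation*}
X_T^\ast = c\, F(T,Y_T)^{1/(1-\alpha)}
\end{equation*}
with $c$ fixed by the budget constraint. Using the fact that $Y_T - Y_t$ is Gaussian with mean $0$ and variance $T-t$ and independent of $\FC_t^S$ under $\mathbb{Q}$, the optimal wealth process is
\begin{equation*}
X_t^\ast = c\, e^{-r(T-t)} g(t,Y_t), \qquad g(t,y) := \Eop\big[F(T, y + W_{T-t})^{1/(1-\alpha)}\big].
\end{equation*}

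Applying Itô's formula to $e^{-rt} X_t^\ast = c\, e^{-rT} g(t,Y_t)$, the fact that this is a $\mathbb{Q}$-martingale forces the drift to cancel, leaving $d(e^{-rt}X_t^\ast) = c\, e^{-rT} g_y(t,Y_t)\, dY_t$. Matching this against $d(e^{-rt}X_t^\ast) = e^{-rt} X_t^\ast \sigma \pi_t^\ast\, dY_t$ gives $\pi_t^\ast = g_y(t,Y_t)/(\sigma g(t,Y_t))$. A direct differentiation under the expectation with $\partial_y L_T(\mu_k,y) = \gamma_k L_T(\mu_k,y)$, hence $\partial_y F(T,y) = \sum_{k=1}^d p_k \gamma_k L_T(\mu_k,y)$, then produces exactly the formula stated in Theorem \ref{theo:sol}.

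The main obstacle is the verification side of the argument: one must exhibit an $\F^S$-adapted self-financing strategy that actually attains $X_T^\ast$ (via a martingale representation theorem for $\FC^Y$-martingales under $\mathbb{Q}$), check that the strategy $\pi^\ast$ defined above is admissible, and justify differentiating $g$ through the expectation by dominated convergence using the explicit Gaussian form of $L_T(\mu_k,\cdot)$. Strict concavity of $U$ on $(0,\infty)$ then guarantees that the Lagrangian critical point is indeed the global maximum.
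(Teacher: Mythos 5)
Your proposal is correct and follows essentially the same route as the paper, which does not prove Theorem \ref{theo:sol} itself but refers to Karatzas and Zhao (2001) and Rieder and B\"auerle (2005), where the result is obtained by exactly this change-of-measure/martingale-duality argument (reduce to a static problem under $\mathbb{Q}$, solve pointwise via the first-order condition, and recover $\pi^\ast$ by matching the diffusion coefficient of the optimal wealth). The verification issues you flag (replication via martingale representation, admissibility, differentiation under the expectation) are precisely the details handled in those references.
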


Note that a similar statement is true when the initial distribution of $\theta$ is more general, see e.g.\ \cite{KZ01}, \cite{lm16}. Further it is crucial to remark that in case $\theta$ is known and equal to $\mu$, then the optimal fraction to invest in the stock is equal to
\begin{equation} \frac{1}{(1-\alpha)}\frac{\mu-r}{\sigma^2}\end{equation}
independent of time and wealth. This is the so-called {\em Merton-ratio}. The representation of the optimal portfolio strategy in Theorem \ref{theo:sol} is derived by using a change of measure technique (for more details see the proof of Theorem 8 in Rieder \& B\"auerle 2005). An equivalent formulation would be
\begin{equation} u^\ast(t,T,y) = \frac{1}{1-\alpha} \frac{\Eop[\theta |Y_t=y]-r}{\sigma^2}+ h(t,T,y)\end{equation}
where the first term corresponds to the Merton-ratio where we replace the unknown $\theta$ by its estimate and $h(t,T,y)$ is the so-called hedging demand. The hedging demand vanishes for $t\to T$ and for $\alpha\to 0$ (the case $\alpha =0$ formally corresponds to the logarithmic utility). For a proof of the latter statements see Theorem 9 in Rieder \& B\"auerle (2005).

\section{Convergence of the Optimal Investment Strategy as Time Horizon tends to Infinity}\label{sec:MVFCFS}
\subsection{The case $\alpha \in(0,1)$}
In this section we are going to prove the following result:

\begin{theorem}\label{theo:main}
Let $t\ge 0$, $y\in\R$, $\alpha \in (0,1)$ and $r<\mu_1<\ldots <\mu_d$. Then
\begin{equation} \lim_{T\to\infty} u^\ast(t,T,y) = \frac{1}{\sigma(1-\alpha)} \frac{\mu_d-r}{\sigma}= \frac{1}{\sigma(1-\alpha)} \gamma_d. \end{equation}
\end{theorem}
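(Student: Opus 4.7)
The plan is to identify the term corresponding to the largest drift $\mu_d$ as the dominant contribution in both numerator and denominator of the representation of $u^\ast(t,T,y)$ in Theorem~\ref{theo:sol}, so that only $\gamma_d$ survives in the ratio. Using $L_T(\mu_k,y+W_{T-t}) = L_t(\mu_k,y)\, Z_k^T$ with
\[ Z_k^T := \exp\bigl(\gamma_k W_{T-t} - \tfrac12\gamma_k^2(T-t)\bigr),\]
and introducing the posterior weights $\hat p_k := p_k L_t(\mu_k,y)/F(t,y)$, the common factor $F(t,y)^{1/(1-\alpha)}$ cancels between numerator and denominator, leaving
\[ u^\ast(t,T,y) = \frac{1}{\sigma(1-\alpha)}\cdot \frac{\Eop\bigl[S_T^{\alpha/(1-\alpha)} \sum_{k=1}^d \hat p_k\gamma_k Z_k^T\bigr]}{\Eop\bigl[S_T^{1/(1-\alpha)}\bigr]},\qquad S_T := \sum_{k=1}^d \hat p_k Z_k^T.\]
The hypothesis $r<\mu_1<\cdots<\mu_d$ forces $0<\gamma_1<\cdots<\gamma_d$, which suggests that $Z_d^T$ should dominate $S_T$ in an exponential sense.

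Set $q := 1/(1-\alpha)>1$ and factor $Z_d^T$ out: writing $S_T = Z_d^T R_T$ and $\sum_k\hat p_k\gamma_k Z_k^T = Z_d^T \tilde R_T$ with
\[ R_T := \hat p_d + \sum_{k<d}\hat p_k \frac{Z_k^T}{Z_d^T}, \qquad \tilde R_T := \hat p_d\gamma_d + \sum_{k<d}\hat p_k\gamma_k\frac{Z_k^T}{Z_d^T},\]
both numerator and denominator carry the common factor $(Z_d^T)^q$. Since $(Z_d^T)^q/\Eop[(Z_d^T)^q]$ equals the Dol\'eans exponential $\mathcal{E}_T := \exp(q\gamma_d W_{T-t} - \tfrac{1}{2}q^2\gamma_d^2(T-t))$, changing measure via $dQ/d\Pop = \mathcal{E}_T$ cancels the deterministic normalization $\Eop[(Z_d^T)^q] = \exp(\tfrac12 q(q-1)\gamma_d^2(T-t))$ from both terms and yields
\[ u^\ast(t,T,y) = \frac{1}{\sigma(1-\alpha)}\cdot\frac{\Eop^Q[R_T^{q-1}\tilde R_T]}{\Eop^Q[R_T^q]}.\]
Under $Q$, $W_{T-t}$ has the law $N(q\gamma_d(T-t),T-t)$, and a direct calculation shows that $\log(Z_k^T/Z_d^T)$ has $Q$-mean $(\gamma_k-\gamma_d)\bigl(q\gamma_d - \tfrac12(\gamma_k+\gamma_d)\bigr)(T-t)$; the bracket is positive because $q>1$ and $\gamma_d>\gamma_k>0$, so the mean tends to $-\infty$ linearly in $T$ while the $Q$-standard deviation grows only like $\sqrt{T-t}$. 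Hence $Z_k^T/Z_d^T\to 0$ in $Q$-probability for every $k<d$, so $R_T\to\hat p_d$ and $\tilde R_T\to\hat p_d\gamma_d$ in law, and the quotient formally tends to $\gamma_d$.

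The main obstacle is to justify exchanging limit and expectation. Since $R_T\ge\hat p_d>0$ and $\tilde R_T\le\gamma_d R_T$, it suffices to produce a uniform $L^{q+\varepsilon}(Q)$-bound on $R_T$ for some $\varepsilon>0$. A Gaussian computation gives
\[\Eop^Q\bigl[(Z_k^T/Z_d^T)^p\bigr] = \exp\Bigl(\tfrac{p(\gamma_k-\gamma_d)}{2}\bigl((2q-1)\gamma_d-\gamma_k + p(\gamma_k-\gamma_d)\bigr)(T-t)\Bigr),\]
whose exponent is strictly negative exactly when $p<\bigl((2q-1)\gamma_d-\gamma_k\bigr)/(\gamma_d-\gamma_k)$. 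Elementary algebra shows that this threshold strictly exceeds $q$ (the decisive inequality being $(q-1)(\gamma_d+\gamma_k)>0$), so one can take $p=q+\varepsilon$ for some small $\varepsilon>0$. The resulting uniform integrability combined with the distributional limits gives $\Eop^Q[R_T^q]\to\hat p_d^q$ and $\Eop^Q[R_T^{q-1}\tilde R_T]\to\hat p_d^q\gamma_d$, so the ratio tends to $\gamma_d$ and the claimed limit $\gamma_d/(\sigma(1-\alpha))$ follows.
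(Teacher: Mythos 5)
Your proof is correct, and it takes a genuinely different route from the paper's. The paper reduces the claim to showing $f_d(T,\alpha)\to 1$ for the weights $f_k(T,\alpha)$ with $\sum_k f_k=1$, and then only establishes the lower bound $\liminf_{T\to\infty} f_d(T,\alpha)\ge p_d^{\beta-1}$ (with $\beta=\tfrac{1}{1-\alpha}$) by discarding all but the $k=d$ term in the numerator and applying Jensen's inequality to the denominator; since this bound is strictly below $1$ for fixed $\alpha$, the paper must invoke a separate monotonicity lemma (that $\alpha\mapsto f_d(T,\alpha)$ is increasing, proved via the likelihood ratio order) and then let $\alpha\to 0^+$ to conclude. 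Your argument instead tilts the measure by $(Z_d^T)^q/\Eop[(Z_d^T)^q]$, after which the dominance of the top drift becomes a statement about $Z_k^T/Z_d^T\to 0$ under the tilted law together with a uniform $L^{q+\varepsilon}$ moment bound; I checked the Gaussian computations (the $Q$-mean of $\log(Z_k^T/Z_d^T)$, the moment formula, and the threshold inequality reducing to $(q-1)(\gamma_d+\gamma_k)>0$) and they are right. This buys you a one-step proof valid for each fixed $\alpha\in(0,1)$ with no auxiliary ordering lemma and no limit in $\alpha$, and it actually identifies the exact first-order asymptotics of numerator and denominator separately, which would also yield convergence rates. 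What the paper's route buys is elementarity: only Jensen and the normal moment generating function, at the cost of the extra lemma. One presentational caveat in your write-up: the measure $Q$ depends on $T$, so "convergence in $Q$-probability'' and "uniform integrability'' are really statements about the family of laws of $R_T$ under $Q_T$; your moment bounds do deliver exactly that (weak convergence to a point mass plus uniform integrability of the $q$-th powers implies convergence of the means), but you should say so explicitly.
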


This result is quiet interesting because it says that with a very large time horizon, the optimal investment strategy in the Bayesian case  with $\alpha\in(0,1)$ is approximately the same as in a setting with known, maximal possible drift. This result is independent of the prior distribution and the current belief, so the actual probability that the maximal drift has realized could be rather small. This means that the Bayes investor  with $\alpha\in(0,1)$ is pretty optimistic in the long run. Such a behavior has been conjectured in \cite{br05} and \cite{lm16} from the observation of numerical data. It is also remarkable that in the case of a logarithmic utility function, the optimal fraction of wealth invested in the stock is in feedback form given by (see e.g. Rieder \& B\"auerle 2005)
\begin{equation} u^*(t,T,y)= \frac{1}{\sigma^2(1-\alpha)} \Big({\frac{\sum_{k=1}^d p_k\mu_k L_t(\mu_k,y)}{F(t,y)}-r}\Big).\end{equation}
This expression is independent of the time horizon $T$. And it has been shown in \cite{br05} that for $\alpha\to 0$, the optimal fraction converges to the optimal fraction of the logarithmic utility function. This is of course no contradiction but shows that there is a significant difference of behavior between an investor with power utility and an investor with logarithmic utility function. In the next subsection we will discuss the case $\alpha<0$.

For the proof we need the following result, where $\varphi_t$ is the density of a normal distribution with zero mean and variance $t$.

\begin{lemma}\label{lem:monotone}
The function
\begin{equation} \alpha \mapsto \frac{\int_{\R} F(T,y+x)^{\frac{\alpha}{1-\alpha}} p_d L_T(\mu_d,y+x) \varphi_{T-t}(x)dx}{\int_{\R} F(T,y+x)^{\frac{1}{1-\alpha}}\varphi_{T-t}(x)dx}\end{equation}
is increasing for $\alpha<1$.
\end{lemma}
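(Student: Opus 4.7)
The plan is to reparametrize the ratio as an expectation under an exponential family of tilted measures and then apply the Chebyshev sum inequality for comonotone functions on $\R$. Set $s := 1/(1-\alpha)$, which is strictly increasing in $\alpha$ on $(-\infty,1)$, so it suffices to show the ratio is nondecreasing in $s$. Using $\alpha/(1-\alpha) = s-1$, I have $F(T,y+x)^{s-1} p_d L_T(\mu_d,y+x) = F(T,y+x)^{s}\cdot g(y+x)$, where $g(z) := p_d L_T(\mu_d,z)/F(T,z) \in [0,1]$. Hence the ratio equals $R(s) = \Eop_{\nu_s}[g(y+X)]$, where $\nu_s$ is the probability measure on $\R$ with density proportional to $F(T,y+x)^s \varphi_{T-t}(x)$. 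This is an exponential family with natural parameter $s$ and sufficient statistic $\log F(T,y+x)$.

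Next I would differentiate $R$ in $s$. The standard log-derivative identity for exponential families yields
\[ R'(s) = \mathrm{Cov}_{\nu_s}\bigl(g(y+X),\, \log F(T,y+X)\bigr), \]
with differentiation under the integral justified because $F(T,y+x)^s$ grows at most exponentially in $x$ while $\varphi_{T-t}(x)$ decays super-exponentially, providing an integrable dominating function on any bounded $s$-interval. It therefore suffices to show that in the variable $x$ both $g(y+x)$ and $\log F(T,y+x)$ are nondecreasing: by the Chebyshev sum inequality, comonotone functions have nonnegative covariance under any probability measure on $\R$.

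For the monotonicity of $g$, I would factor $L_T(\mu_d,z)$ from numerator and denominator of $g(z)$ to write $g(z)^{-1} = \sum_{k=1}^d (p_k/p_d) \exp\bigl((\gamma_k-\gamma_d)z - (\gamma_k^2-\gamma_d^2)T/2\bigr)$. Since $\gamma_k < \gamma_d$ for $k<d$ (using $\mu_k<\mu_d$ and $\sigma>0$), each such summand is strictly decreasing in $z$, so $g$ is strictly increasing in $z$. For $\log F(T,z)$, differentiation gives the posterior mean
\[ \partial_z \log F(T,z) = \sum_{k=1}^d \gamma_k\, \frac{p_k L_T(\mu_k,z)}{F(T,z)}, \]
a convex combination of $\gamma_1,\ldots,\gamma_d$; this is nonnegative precisely because all $\gamma_k > 0$ under the standing hypothesis $r<\mu_1<\ldots<\mu_d$ inherited from Theorem~\ref{theo:main}.

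Combining these observations yields $R'(s) \geq 0$, hence the lemma. The main obstacle is the positivity of $\gamma_1$: without the hypothesis $r<\mu_1$, $\log F(T,\cdot)$ need not be monotone (e.g., in a symmetric two-state example with $\mu_1+\mu_d=2r$ it is U-shaped about its minimum), and the Chebyshev step breaks. Under the assumptions carried over from Theorem~\ref{theo:main}, however, the rest of the argument (the exponential-family differentiation and the covariance inequality) is routine.
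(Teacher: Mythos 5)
Your proof is correct, and its skeleton coincides with the paper's: both arguments rewrite the ratio as the expectation of the increasing function $g(y+x)=p_dL_T(\mu_d,y+x)/F(T,y+x)$ under the tilted density proportional to $F(T,y+x)^{1/(1-\alpha)}\varphi_{T-t}(x)$, and both rest on exactly the two monotonicity facts you isolate ($g$ increasing, $F(T,y+\cdot)$ increasing). Where you diverge is the comparison mechanism at the end: the paper shows the family of tilted densities is increasing in the likelihood ratio order in $\alpha$ (the ratio of two such densities is $F^{\frac{1}{1-\alpha_2}-\frac{1}{1-\alpha_1}}$ times a constant, hence increasing in $z$), and then invokes likelihood ratio order $\Rightarrow$ stochastic order $\Rightarrow$ monotone expectations of increasing functions; you instead differentiate in the natural parameter $s=1/(1-\alpha)$ and conclude via $R'(s)=\mathrm{Cov}_{\nu_s}(g,\log F)\ge 0$ for comonotone integrands. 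The two routes are standard duals of one another; the paper's is a touch cleaner in that it needs no justification of differentiation under the integral sign (which you do supply adequately), while yours makes the quantitative structure of the dependence on $\alpha$ explicit. One genuinely useful point in your write-up: you flag that the monotonicity of $\log F(T,y+\cdot)$ (equivalently, of $F(T,y+\cdot)$, which the paper asserts without comment) requires $\gamma_k>0$ for all $k$, i.e.\ the hypothesis $r<\mu_1$ carried over from the surrounding theorem; the lemma as stated omits this hypothesis, and both your proof and the paper's need it at exactly this step.
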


\begin{proof}
Similar proofs can be found in \cite{br05} Theorem 9 and \cite{lm16} Theorem 6.
We start by rewriting this expression as
\begin{equation}  \frac{\int_{\R} F(T,y+x)^{\frac{1}{1-\alpha}} \varphi_{T-t}(x) p_d {L_T(\mu_d,y+x)}{F(T,y+x)^{-1}} dx}{\int_{\R} F(T,y+x)^{\frac{1}{1-\alpha}}\varphi_{T-t}(x)dx}.\end{equation}
Now denote
\begin{equation} h_T(x) := p_d\frac{L_T(\mu_d,y+x)}{F(T,y+x)}.\end{equation}
The derivative of $h_T$ is given by
\begin{equation} h'_T(x) = \frac{\sum_{k=1}^d p_kp_d \exp\big((\gamma_k+\gamma_d)(y+x)-\frac12 T (\gamma_k^2+\gamma_d^2)\big)(\gamma_d-\gamma_k)}{F^2(T,y+x)}>0\end{equation}
for all $x\in\R$ and is positive by definition of $\gamma_d$. Thus, $h_T$ is increasing.

Next consider the density
\begin{equation} q_T(z,\alpha) := \frac{ F(T,y+z)^{\frac{1}{1-\alpha}} \varphi_{T-t}(z) }{\int_{\R} F(T,y+x)^{\frac{1}{1-\alpha}}\varphi_{T-t}(x)dx}\end{equation}
We show that this density is increasing in the likelihood ratio ordering with respect to $\alpha$. For more details about the likelihood ratio ordering consult \cite{ms}.
In order to do so, we have to show by definition that for $\alpha_1<\alpha_2<1$, the ratio
\begin{equation} z\mapsto \frac{q_T(z,\alpha_2)}{q_T(z,\alpha_1)}\end{equation} is increasing. We obtain
\begin{equation} \frac{q_T(z,\alpha_2)}{q_T(z,\alpha_1)}= F(T,y+z)^{\frac{1}{1-\alpha_2}-\frac{1}{1-\alpha_1}} \cdot C\end{equation}
where $C$ is a positive constant independent of $z$. Since $z\mapsto F(T,y+z)$ is increasing and we have by assumption ${\frac{1}{1-\alpha_2}-\frac{1}{1-\alpha_1}}>0$, the ratio is indeed increasing in $z$. Since the likelihood ratio ordering implies the usual stochastic ordering, the statement follows by definition of the stochastic ordering and the fact that $h_T$ is increasing.
\end{proof}

Now we are able to prove Theorem \ref{theo:main}.

\begin{proof}
It holds that
\begin{eqnarray}
% \nonumber to remove numbering (before each equation)
\nonumber v^*(t,T,y)   &:=& \frac{\Eop\left[ F(T,y+W_{T-t})^{\frac{\alpha}{1-\alpha}}\Big(\sum_{k=1}^d p_k \gamma_k L_T(\mu_k,y+W_{T-t}) \Big)\right]}{\Eop\left[F(T,y+W_{T-t})^{\frac{1}{1-\alpha}}\right]} \\
   &=&  \frac{\int_{\R} F(T,y+x)^{\frac{\alpha}{1-\alpha}} \Big(\sum_{k=1}^d p_k \gamma_k L_T(\mu_k,y+x) \Big)\varphi_{T-t}(x)dx}{\int_{\R} F(T,y+x)^{\frac{1}{1-\alpha}}\varphi_{T-t}(x)dx}
\end{eqnarray}
and we have to show that $\lim_{T\to\infty} v^*(t,T,y) = \gamma_d$.

%It is sufficient to prove the statement for $t=y=0$. The proof for general $t$ and $y$ follows analogously.
First define
\begin{equation} f_k(T,\alpha) := \frac{\int_{\R} F(T,y+x)^{\frac{\alpha}{1-\alpha}} \varphi_{T-t}(x) p_k L_T(\mu_k,y+x) dx}{\int_{\R} F(T,y+x)^{\frac{1}{1-\alpha}}\varphi_{T-t}(x)dx}.\end{equation}
Note that $f_k$ also depends on $y$. However $y$ will be fixed throughout and we do not make the dependence explicit in our notation.
Obviously $f_k(T,\alpha)>0$, $\sum_{k=1}^d f_k(T,\alpha)=1$ and
\begin{equation}v^*(t,T,y) = \sum_{k=1}^d \gamma_k f_k(T,\alpha).\end{equation} Thus, it is enough to show that $  \lim_{T\to\infty} f_d(T,\alpha) = 1$. Once we have shown this statement for an $\alpha'>0$, it holds for all $\alpha\ge \alpha'$, since by Lemma \ref{lem:monotone} we have that $f_d(T,\alpha') \le f_d(T,\alpha) \le 1$ for all $T>0$.
Now suppose $\alpha\in (0,1)$ and define $\beta := \frac{1}{1-\alpha}>1$. Consider the following lines of inequalities where we use the formula for the moment generating function of a normal distributed random variable in the last equation.
\begin{eqnarray}
% \nonumber to remove numbering (before each equation)
\nonumber  f_d(T,\alpha) &\ge & \frac{\int_{\R} \Big(p_d L_T(\mu_d,y+x) \Big)^{\frac{\alpha}{1-\alpha}}\varphi_{T-t}(x) p_d L_T(\mu_d,y+x)dx}{\int_{\R} F(T,y+x)^\beta\varphi_{T-t}(x)dx}  \\
\nonumber   &=& \frac{\int_{\R} \Big(p_d L_T(\mu_d,y+x) \Big)^\beta\varphi_{T-t}(x) dx}{\int_{\R} F(T,y+x)^\beta\varphi_{T-t}(x)dx} \\
\nonumber   &=&  \frac{p_d^\beta \int_{\R} \exp\big(\gamma_d\beta (y+x)-\frac12 \gamma_d^2\beta T \big)\varphi_{T-t}(x) dx}{\int_{\R} F(T,y+x)^\beta\varphi_{T-t}(x)dx}  \\
   &=&  \frac{p_d^\beta \exp\big(\frac12 \gamma_d^2\beta(\beta-1) T \big)\cdot \exp\big(\gamma_d \beta y -\frac12 \gamma_d^2 t\beta^2\big)}{\int_{\R} F(T,y+x)^\beta\varphi_{T-t}(x)dx}.
\end{eqnarray}
Next we obtain with the Jensen inequality (note that $x\mapsto x^\beta$ is convex for $x\ge 0$):
\begin{eqnarray}
% \nonumber to remove numbering (before each equation)
\nonumber  f_d(T,\alpha) &\ge & \frac{p_d^\beta \exp\big(\frac12 \gamma_d^2\beta(\beta-1) T \big)\cdot \exp\big(\gamma_d \beta y -\frac12 \gamma_d^2 t\beta^2\big)}{\int_{\R} \Big(\sum_{k=1}^d p_k L_T(\mu_k,y+x) \Big)^\beta\varphi_{T-t}(x)dx}\\
\nonumber  &\ge & \frac{p_d^\beta \exp\big(\frac12 \gamma_d^2\beta(\beta-1) T \big)\cdot \exp\big(\gamma_d \beta y -\frac12 \gamma_d^2 t\beta^2\big)}{\int_{\R} \Big(\sum_{k=1}^d p_k L_T(\mu_k,y+x)^\beta\Big)\varphi_{T-t}(x)dx}\\
  &=& \frac{p_d^\beta \exp\big(\frac12 \gamma_d^2\beta(\beta-1) T \big)\cdot \exp\big(\gamma_d \beta y -\frac12 \gamma_d^2 t\beta^2\big)}{\sum_{k=1}^d p_k \exp\big(\frac12 \gamma_k^2\beta(\beta-1) T\big)\cdot \exp\big(\gamma_k \beta y -\frac12 \gamma_k^2 t\beta^2\big)}.
\end{eqnarray}
Thus we obtain that
\begin{equation}\lim_{T\to\infty} f_d(T,\alpha) \ge  p_d^\beta \Big( \lim_{T\to\infty} \sum_{k=1}^d p_k \exp\big(\frac12 (\gamma_k^2-\gamma_d^2)\beta(\beta-1) T\big)\Big)^{-1} = p_d^{\beta-1}.\end{equation}
Since $\beta$ can be chosen arbitrarily close to $1$, the statement follows.
\end{proof}

\subsection{The case $\alpha <0$}
In this section we are going to prove the following result:

\begin{theorem}\label{theo:main}
Let $t\ge 0$, $y\in\R$, $\alpha <0$ and $r<\mu_1<\ldots <\mu_d$. Then
\begin{equation} \lim_{T\to\infty} u^\ast(t,T,y) = \frac{1}{\sigma(1-\alpha)} \frac{\mu_1-r}{\sigma}= \frac{1}{\sigma(1-\alpha)} \gamma_1. \end{equation}
\end{theorem}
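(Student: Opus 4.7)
The plan is to mirror the structure of the proof of Theorem \ref{theo:main} for $\alpha\in(0,1)$, with the roles of the largest and smallest Merton ratios exchanged. As before, I would rewrite the optimal fraction as
\begin{equation}
u^\ast(t,T,y)=\frac{1}{\sigma(1-\alpha)}v^\ast(t,T,y),\qquad v^\ast(t,T,y)=\sum_{k=1}^d \gamma_k f_k(T,\alpha),
\end{equation}
with $f_k(T,\alpha)$ defined exactly as in the previous subsection and satisfying $f_k\ge 0$, $\sum_k f_k=1$. Since $\gamma_1=\min_k\gamma_k$, showing $f_1(T,\alpha)\to 1$ (equivalently $f_k(T,\alpha)\to 0$ for every $k\ge 2$) suffices. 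Throughout I set $\beta:=1/(1-\alpha)$, which now lies in $(0,1)$ because $\alpha<0$, so that $\alpha/(1-\alpha)=\beta-1<0$.

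The first step is a reduction via monotonicity analogous to Lemma \ref{lem:monotone}. Replacing $\mu_d$ by $\mu_1$ in the definition of $h_T$ and computing $h_T'$ exactly as in that lemma, one finds that $x\mapsto p_1L_T(\mu_1,y+x)/F(T,y+x)$ is \emph{decreasing}, because every factor $(\gamma_1-\gamma_k)$ appearing in the derivative is nonpositive. The density $q_T(\cdot,\alpha)$ remains increasing in $\alpha$ in the likelihood ratio order, so integrating a decreasing function against it now gives that $\alpha\mapsto f_1(T,\alpha)$ is \emph{decreasing}. Consequently, once $f_1(T,\alpha_0)\to 1$ is established for some fixed $\alpha_0<0$, the same limit follows for every $\alpha\le\alpha_0$; in particular, if one can do this for $\alpha_0$ arbitrarily close to $0$, then the result holds for every $\alpha<0$.

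The second step is the explicit bound, in parallel with the Jensen-inequality chain in the $\alpha\in(0,1)$ case. I would bound each $f_k$ ($k\ge 2$) from above by using only one term of $F$ in both integrals. In the numerator, because $\beta-1<0$, the trivial bound $F\ge p_1L_T(\mu_1,y+x)$ yields the upper estimate $F^{\beta-1}\le p_1^{\beta-1}L_T(\mu_1,y+x)^{\beta-1}$. In the denominator, because $\beta>0$, the same trivial bound gives $F^\beta\ge p_1^\beta L_T(\mu_1,y+x)^\beta$. Both resulting integrals are Gaussian and evaluate in closed form via the normal moment generating function. Collecting the $T$-dependent exponents, the coefficient of $T$ in the upper bound on $f_k(T,\alpha)$ simplifies to $(\beta-1)\gamma_1(\gamma_k-\gamma_1)$, which is strictly \emph{negative} for $k\ge 2$ since $\beta-1<0$, $\gamma_1>0$ (using the hypothesis $r<\mu_1$), and $\gamma_k>\gamma_1$. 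Hence $f_k(T,\alpha)\to 0$ exponentially for every $k\ge 2$, which forces $f_1(T,\alpha)\to 1$ and $v^\ast(t,T,y)\to\gamma_1$.

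The main technical obstacle is keeping the signs straight: in this regime the natural convexity direction of $x\mapsto x^\beta$ is reversed (since $\beta\in(0,1)$), so Jensen's inequality no longer gives the sharp bound used in the $\alpha\in(0,1)$ proof, and one must instead argue directly with the one-term lower/upper bound $F\ge p_1 L_T(\mu_1,\cdot)$. The hypothesis $r<\mu_1$ is essential here: it guarantees $\gamma_1>0$, without which the product $(\beta-1)\gamma_1(\gamma_k-\gamma_1)$ could fail to be negative and the decay of $f_k$ would be lost.
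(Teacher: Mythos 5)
Your proposal is correct, and after the common reduction to showing $f_1(T,\alpha)\to 1$ it takes a genuinely different route from the paper. The paper works with a lower bound on $f_1$ itself: it changes variables $x\mapsto x\sqrt{T-t}$, rewrites the relevant density as a normalized power $\big(\sum_k\hat p_k(T)\varphi_{k,T}\big)^{1/(1-\alpha)}$ of a Gaussian mixture, applies a Jensen-type inequality to the denominator, truncates the integral at $\gamma_1\lambda\sqrt{T-t}$ for a carefully chosen $\lambda\in\big(1,\tfrac12(\gamma_2/\gamma_1+1)\big)$, and then sends three separate factors to $1$. You instead bound each $f_k$, $k\ge 2$, from \emph{above}, using only the one-term estimate $F\ge p_1L_T(\mu_1,\cdot)$ in both numerator (where $\beta-1<0$ reverses the inequality) and denominator (where $\beta>0$ preserves it); both resulting integrals are explicit Gaussian moment-generating-function computations, and the exponent of $T$ indeed collapses to $(\beta-1)\gamma_1(\gamma_k-\gamma_1)<0$ — I have checked this algebra and it is right. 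Your route is more elementary (no change of variables, no mixture representation, no truncation), it gives an explicit exponential rate $f_k(T,\alpha)\le \tfrac{p_k}{p_1}\exp\big((\beta-1)\gamma_1(\gamma_k-\gamma_1)T+C\big)$ with $C=C(t,y,\beta)$ independent of $T$, and it isolates cleanly where the hypothesis $\mu_1>r$ enters. It also mirrors the $\alpha\in(0,1)$ proof more faithfully, with the one-term bound playing the role that Jensen played there. One further remark: since your upper bound on $f_k$ is valid for every fixed $\alpha<0$, your first step (the monotonicity reduction to $\alpha$ near $0$, analogous to Lemma \ref{lem:monotone}) is actually superfluous in your argument — harmless, but you could delete it; in the paper's proof that reduction is likewise stated but the final limits are also obtained for fixed $\alpha<0$.
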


In the case $\alpha<0$  the optimal investment strategy in the Bayesian case for an investor with a very large time horizon is approximately the same as in a setting with known, minimal possible drift. This result is independent of the prior distribution and the current belief. This means that the Bayes investor  with $\alpha<0$ is pretty pessimistic in the long run. Obviously there is a significant difference in the long run behavior of investors with $\alpha<0$, $\alpha=0$ (logarithmic utility) and investors with $\alpha >0$.

\begin{proof}
We define $v^*(t,T,y)$ and $f_k(T,\alpha)$ as in the previous proof and have to show that \linebreak $\lim_{T\to\infty} v^*(t,T,y)=\gamma_1$. For this it is enough to show that $\lim_{T\to\infty} f_1(T,\alpha)=1$. Similar to Lemma \ref{lem:monotone} it is possible to show that $\alpha \mapsto f_1(T,\alpha)$ is decreasing in $\alpha$ for all $\alpha<1$ (we simply have to replace $\gamma_d$ by $\gamma_1$). Thus we have for $\alpha'\le \alpha <0$ that $1\ge f_1(T,\alpha')\ge f_1(T,\alpha)$ and it is enough to prove $\lim_{T\to\infty} f_1(T,\alpha)=1$ for an $\alpha<0$ arbitrary close to zero. In order to do so we first rewrite $f_1$ as follows:
\begin{eqnarray}
% \nonumber to remove numbering (before each equation)
\nonumber && f_1(T,\alpha)\\
 \nonumber &=& \frac{\int_{\R} F(T,y+x)^{\frac{\alpha}{1-\alpha}} \varphi_{T-t}(x) p_1 L_T(\mu_1,y+x) dx}{\int_{\R} F(T,y+x)^{\frac{1}{1-\alpha}}\varphi_{T-t}(x)dx} \\
\nonumber   &=& \frac{\int_{\R} F(T,y+x)^{\frac{1}{1-\alpha}} \varphi_{T-t}(x) p_1 L_T(\mu_1,y+x)F(T,x+y)^{-1} dx}{\int_{\R} F(T,y+x)^{\frac{1}{1-\alpha}}\varphi_{T-t}(x)dx} \\
   &=& \frac{\int_{\R} F(T,y+x\sqrt{T-t})^{\frac{1}{1-\alpha}} \varphi_1(x) p_1 L_T(\mu_1,y+x\sqrt{T-t})F(T,y+x\sqrt{T-t})^{-1} dx}{\int_{\R} F(T,y+x\sqrt{T-t})^{\frac{1}{1-\alpha}}\varphi_1(x)dx}.
\end{eqnarray}
We used the change of variables $z := \frac{x}{\sqrt{T-t}}$ in the last equation. Let us now consider the density
\begin{equation} f_T(z) := \frac{ F(T,y+z\sqrt{T-t})^{\frac{1}{1-\alpha}} \varphi_1(z) }{\int_{\R} F(T,y+x\sqrt{T-t})^{\frac{1}{1-\alpha}}\varphi_1(x)dx}\end{equation}
for fixed $y$ and $T$ and the function
\begin{equation}g(x,T) := \frac{p_1 L_T(\mu_1,y+x\sqrt{T-t})}{F(T,y+x\sqrt{T-t})}.\end{equation}
Then we can write $f_1(T,\alpha) = \int_\R g(x,T) f_T(x) dx$.
We further use the notation $\varphi_{k,T}$ for the density of a normal distribution $\mathcal{N}\big((\gamma_k \sqrt{T-t})(1-\alpha)^{-1}, (1-\alpha)^{-1} \Big).$ We obtain:
\begin{eqnarray}
% \nonumber to remove numbering (before each equation)
\nonumber  && F(T,y+x\sqrt{T-t})^{\frac{1}{1-\alpha}} \varphi_1(x) =\\
 \nonumber &=&\Big(\sum_{k=1}^d p_k \exp\Big(\gamma_k x\sqrt{T-t}-\frac12 \gamma_k^2 T-\frac12 x^2 (1-\alpha)+\gamma_ky\Big) \Big)^\frac{1}{1-\alpha}\frac{1}{\sqrt{2\pi}} \\
\nonumber   &=& \Big(\sum_{k=1}^d p_k \varphi_{k,T}(x) \frac{\sqrt{2\pi}}{\sqrt{1-\alpha}}\exp\Big(\frac12 \gamma_k^2 (T\alpha-t)(1-\alpha)^{-1}+\gamma_ky\Big) \Big)^\frac{1}{1-\alpha}\frac{1}{\sqrt{2\pi}}  \\
   &=& \Big(\sum_{k=1}^d q_k(T) \varphi_{k,T}(x) \Big)^\frac{1}{1-\alpha}\frac{1}{\sqrt{2\pi}}
\end{eqnarray}
with $q_k(T) := p_k \exp(\frac12 \gamma_k^2 (T\alpha-t)(1-\alpha)^{-1}+\gamma_ky)  \frac{\sqrt{2\pi}}{\sqrt{1-\alpha}}$.
Now let \begin{equation}\hat{p}_k(T) := \frac{q_k(T)}{\sum_{j=1}^d q_j(T)}.\end{equation}
Then we can finally write the density $f_T$ as a mixture of normal densities:
\begin{equation} f_T(x) = \frac{\Big(\sum_{k=1}^d \hat{p}_k(T) \varphi_{k,T}(x) \Big)^\frac{1}{1-\alpha}}{\int_\R \Big(\sum_{k=1}^d \hat{p}_k(T) \varphi_{k,T}(x)\Big)^\frac{1}{1-\alpha}dx}.\end{equation}
Next we obtain the following estimate with the Jensen inequality since $x\mapsto x^\frac{1}{1-\alpha}$ is concave
\begin{eqnarray}
% \nonumber to remove numbering (before each equation)
\nonumber  f_1(T,\alpha) &\ge& \frac{\int_\R \Big(\hat{p}_1(T) \varphi_{1,T}(x) \Big)^\frac{1}{1-\alpha}g(x,T)dx}{\int_\R \Big(\sum_{k=1}^d \hat{p}_k(T) \varphi_{k,T}(x)\Big)^\frac{1}{1-\alpha}dx} \ge  \frac{\int_\R \Big(\hat{p}_1(T) \varphi_{1,T}(x) \Big)^\frac{1}{1-\alpha}g(x,T)dx}{ \Big(\int_\R \sum_{k=1}^d \hat{p}_k(T) \varphi_{k,T}(x)dx \Big)^\frac{1}{1-\alpha}}  \\
   &=&  \int_\R \Big(\hat{p}_1(T) \varphi_{1,T}(x) \Big)^\frac{1}{1-\alpha}g(x,T)dx.
\end{eqnarray}
Next let us choose $\lambda\in\R$ such that $1<\lambda< \frac12 (\frac{\gamma_2}{\gamma_1}+1)$ which is possible due to our assumptions and consider the estimate
\begin{eqnarray}
% \nonumber to remove numbering (before each equation)
\nonumber f_1(T,\alpha)&\ge&  \hat{p}_1(T)^\frac{1}{1-\alpha} \int_{-\infty}^{\gamma_1\lambda\sqrt{T-t}} \varphi_{1,T}(x)^\frac{1}{1-\alpha}g(x,T)dx\\
\nonumber &\ge& \hat{p}_1(T)^\frac{1}{1-\alpha} \int_{-\infty}^{\gamma_1\lambda\sqrt{T-t}} \varphi_{1,T}(x)g(x,T)dx\\
&\ge& \hat{p}_1(T)^\frac{1}{1-\alpha} g(\gamma_1\lambda\sqrt{T-t},T) \int_{-\infty}^{\gamma_1\lambda\sqrt{T-t}} \varphi_{1,T}(x)dx
\end{eqnarray}
where the second inequality follows form the fact that $\varphi_{1,T}(x)\le 1$ for all $x\in\R$ and the last inequality follows since $x\mapsto g(x,T)$ is decreasing which can easily be seen by inspecting its derivative (see also Lemma \ref{lem:monotone}). Finally we investigate what happens with these terms when $T\to\infty$. First we obtain
\begin{eqnarray}
% \nonumber to remove numbering (before each equation)
\nonumber  \lim_{T\to\infty} \hat{p}_1(T) &=& \lim_{T\to\infty} \frac{p_1 \exp\Big( \frac12 \gamma_1^2  (T\alpha-t)(1-\alpha)^{-1}+\gamma_1 y\Big)}{\sum_{j=1}^d p_j \exp\Big( \frac12 \gamma_j^2  (T\alpha-t)(1-\alpha)^{-1}+\gamma_j y\Big)}  \\
\nonumber  &=& \lim_{T\to\infty} p_1\Big(\sum_{j=1}^d p_j \exp\Big( \frac12 (\gamma_j^2 -\gamma_1^2) (T\alpha-t)(1-\alpha)^{-1}+y(\gamma_j-\gamma_1)\Big)\Big)^{-1}\\
  & =&1,
\end{eqnarray}
since $\alpha<0$. Next we obtain
\begin{eqnarray}
% \nonumber to remove numbering (before each equation)
\nonumber &&  \lim_{T\to\infty}g(\gamma_1\lambda\sqrt{T-t},T) \\
\nonumber  &=&\lim_{T\to\infty} p_1\Big(\sum_{j=1}^d p_j \exp\Big( T (\gamma_j -\gamma_1)(\lambda \gamma_1 -\frac12(\gamma_j+\gamma_1))-(t\lambda \gamma_1-y)(\gamma_j-\gamma_1)\Big)\Big)^{-1}\\
  & =&1,
\end{eqnarray}
since $\lambda$ is chosen such that $\lambda \gamma_1 -\frac12(\gamma_j+\gamma_1)<0$ for $j=2,\ldots,d$. Last but not least we obtain by change of variables that
\begin{eqnarray}
% \nonumber to remove numbering (before each equation)
  \lim_{T\to\infty}\int_{-\infty}^{\gamma_1\lambda\sqrt{T-t}} \varphi_{1,T}(x)dx &=&  \lim_{T\to\infty}\int_{-\infty}^{\gamma_1\sqrt{T-t}(\lambda-\frac{1}{1-\alpha})\sqrt{1-\alpha}} \varphi_1(x)dx =1
\end{eqnarray}
since $\frac{1}{1-\alpha}<1<\lambda$. Finally note that $\alpha$ can be made arbitrarily small such that $\hat{p}_1(T)^\frac{1}{1-\alpha}$ can be made arbitrarily close to $1$ which implies the result.
\end{proof}

\begin{example}
Since we have explicit formulas it is easy to compute the optimal fraction of wealth which has to be invested in the stock. To illustrate our theoretical results we have done this calculation for a toy example with $\gamma_1 =1, \gamma_2 = 2, \gamma_3=3$, $p_1=p_2=0.3$, $p_3=0.4$, $\sigma=1$, $t=0$ and $y=0$. For $\alpha$ we have chosen $\alpha=0.5$ and $\alpha=-0.5$. In this example we have for $\alpha=0.5$ that $(\sigma(1-\alpha))^{-1} \gamma_d=6$ and for $\alpha=-0.5$ that $(\sigma(1-\alpha))^{-1} \gamma_1=2/3.$ We can see the convergence as $T\to\infty$ in the figures. The speed of convergence is rather different, but we have no statement about it.
\begin{figure}[h]
    \centering
     \includegraphics[height=5.5cm]{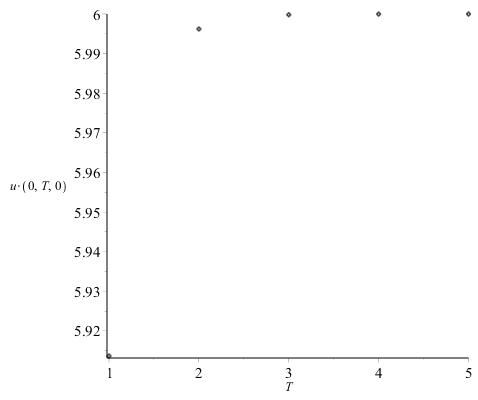}\hspace*{1.5cm} \includegraphics[height=5.5cm]{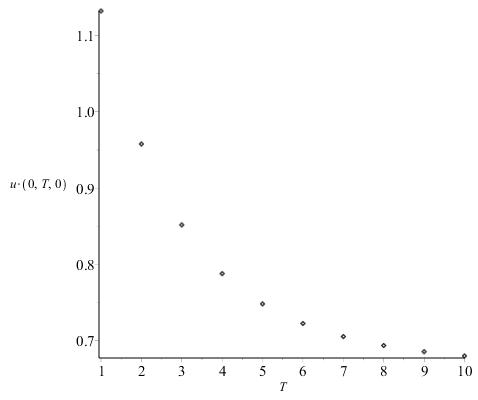}
   \caption{Optimal fraction $u^*(0,T,0)$ of wealth to invest in the stock as a function of the time horizon $T$ for $\alpha=0.5$ (left) and $\alpha=-0.5$ (right)}
    \label{fig:hp2}
\end{figure}
\end{example}

\section{Conclusion}
In this paper we have shown that investors with very long time horizon who cannot observe the drift of the stock and use a Bayesian model behave in an extreme way which is rather surprising. In practice the drift may not be constant over a long time horizon and it would be desirable to do the same analysis in e.g. a Hidden Markov Model. But there are no explicit formulas in this case which makes the analysis much harder. From a theoretical point of view the convergence analysis shows that the behavior of investors with power utility $U(x)=\frac1\alpha x^\alpha$ for $\alpha<1,\alpha\neq 0$ is very sensitive with respect to $\alpha$.

%\bibliography{BibBayes}

\begin{thebibliography}{99}

\bibitem[{B\"auerle and Rieder (2004)}]{br04} B\"auerle, N., U. Rieder, Portfolio optimization with {M}arkov-modulated stock prices and interest rates. IEEE Transactions on Automatic Control 49(3) (2004), 442-447.

\bibitem[{Bj\"ork et al. (2010)}]{bjork2010} Bj{\"o}rk, T., M.H.A. Davis, C. Land{\'e}n, Optimal investment under partial information. Mathematical Methods of Operations Research 71(2) (2010), 371-399.

\bibitem[{Cvitani{\'c} et al. (2006)}]{cvitanic2006} Cvitani{\'c}, J., A. Lazrak, L. Martellini, F. Zapatero,  Dynamic portfolio choice with parameter uncertainty and the economic value of analysts' recommendations. Review of Financial Studies 19(4) (2006), 1113-1156.

\bibitem[{Elliott et al. (1994)}]{eam94} Cvitani{\'c}, Elliott, R., L. Aggoun, J.  Moore, Hidden Markov models: Estimation and control.
Springer, New York (1994). 

\bibitem[{Fouque et al. (2015)}]{fouque2015} Fouque, J.P., A. Papanicolaou, R. Sircar,  Filtering and portfolio optimization with stochastic unobserved drift in asset returns. Communications in Mathematical Sciences 13(4) (2015), 935-953.

\bibitem[{Gu\'eanty and Pu (2016)}]{gp16} Gu\'eanty, O., J. Pu,  Portfolio choice under drift uncertainty: A {B}ayesian learning and stochastic optimal control approach. arXiv:1611.07843 (2006).

\bibitem[{Honda (2003)}]{h03} Honda, T.,  Optimal portfolio choice for unobservable and regime-switching mean returns. J Econ Dyn Control 28 (2006), 45-78.

\bibitem[{Karatzas and Zhao (2001)}]{KZ01} Karatzas, I., X. Zhao,  Bayesian adaptive portfolio optimization. In: Handb. Math. Finance: Option Pricing, Interest Rates and Risk Management, Eds. Cvitani´c, J., M. Musiela, E. Jouini,   Cambridge University Press (2001), 632–669.

\bibitem[{Longo and Mainini (2016)}]{lm16} Longo, M., A. Mainini,  Learning and Portfolio Decisions for {CRRA} Investors. International Journal of Theoretical and Applied Finance 19(3) (2016), 1650018.

\bibitem[{M\"uller and Stoyan (2002)}]{ms} M\"uller, A., D. Stoyan,  Comparison Methods for Stochastic Models and Risks. John Wiley \& Sons (2002).

\bibitem[{Rieder and B\"auerle (2005)}]{br05} Rieder, U., N. B\"auerle, Portfolio optimization with unobservable {M}arkov-modulated drift process. Journal of Applied Probability 42 (2005), 362-378.

\bibitem[{Sass and Haussmann (2004)}]{sh04} Sass, J., U.G. Haussmann, Optimizing the terminal wealth under partial information: The drift process as a continuous time {M}arkov chain. Finance and Stochastics 8 (2004), 553-577.


\bibitem[{Shen and Tak Kuen (2012)}]{shen2012} Shen, Y., T.K. Siu, Asset allocation under stochastic interest rate with regime switching. Economic Modelling 29(4) (2012), 1126-1136.


\bibitem[{Xia (2001)}]{xia01} Xia, Y., Learning about predictability: The effects of parameter uncertainty on dynamic asset allocation. The Journal of Finance 56 (2001), 205-246.

\end{thebibliography}
%\bibliographystyle{plain}

%\bibliographystyle{mybibstylefile1}

\bibliographystyle{abbrv}

\end{document}